\begin{document}

\theoremstyle{plain}
\newtheorem{theorem}{Theorem}
\newtheorem{lemma}[theorem]{Lemma}
\newtheorem{corollary}[theorem]{Corollary}
\newtheorem{conjecture}[theorem]{Conjecture}
\newtheorem{proposition}[theorem]{Proposition}

\theoremstyle{definition}
\newtheorem{definition}{Definition}

\theoremstyle{remark}
\newtheorem*{remark}{Remark}
\newtheorem{example}{Example}

\title{Genuine Secret-Sharing States}

\author{Minjin Choi}
\affiliation{Department of Mathematics and Research Institute for Basic Sciences, 
Kyung Hee University, Seoul 02447, Korea} 

\author{Soojoon Lee}
\affiliation{Department of Mathematics and Research Institute for Basic Sciences, 
Kyung Hee University, Seoul 02447, Korea}

\date{\today}

\begin{abstract}
Quantum secret sharing allows each player 
to have classical information for secret sharing 
in quantum mechanical ways. 
In this work, we construct a class of quantum states 
on which players can quantumly perform secret sharing
secure against dishonest players as well as eavesdropper. 
We here call them the genuine secret-sharing states. 
In addition, we show that 
if $N$ players share an $N$-party genuine secret-sharing state, 
then arbitrary $M$ players out of the total players can share 
an $M$-party genuine secret-sharing state 
by means of local operations and classical communication on the state.
We also define the distillable rate with respect to the genuine secret-sharing state,
and explain the connection between the distillable rate and 
the relative entropy of entanglement.
\end{abstract}

\maketitle


\section{Introduction}
\label{Introduction}

Secret sharing~\cite{B79,S79} is a way of allocating a secret among players, 
and a sufficient number of players must cooperate to restore the secret.
To be more specific,
in a $(k,n)$-threshold secret sharing scheme,
a dealer distributes a secret to $n$ players,
and $k$ or more players can reconstruct the secret if they collaborate,
but fewer than $k$ players cannot do so, where $k \le n$.
Because of this feature,
secret sharing can be used to deal with important and sensitive information.

We remark that quantum mechanics can provide us
with unconditionally secure secret sharing.
For example, there is a quantum protocol~\cite{HBB99} 
based on the Greenberger-Horne-Zeilinger~(GHZ) state~\cite{GHZ}
in which each player can obtain a classical bit for $(n,n)$-threshold secret sharing.
More precisely, if $N$ players including a dealer share an $N$-qubit GHZ state,
then they can carry out an $(N-1,N-1)$-threshold secret sharing through the protocol.
As a matter of fact, this secret-sharing protocol can be considered 
as an natural generalization of 
the Bell-state based quantum key distribution~(QKD) protocol~\cite{E91}.

Even though two players can securely share a secret key through the Bell state,
there is a specific class of states 
on which perfectly secure QKD can be performed. 
The states in the class are called the private states~\cite{HO05,H09}. 
In a similar way to the private states in QKD, 
we can naturally ask the following question,  
what kinds of quantum states are required
to accomplish a secure $(n,n)$-threshold secret sharing?
In Ref.~\cite{CL08}, this question has already been considered, 
and the quantum states, called the secret-sharing states, 
have been suggested as an answer to the question.
It has been shown that
if players share a secret-sharing state,
then each player can have a classical bit for secret sharing, 
which is secure against any external eavesdropper.
However, dishonest players,
who have a fatal impact on the security of secret sharing, 
were not sufficiently considered in Ref.~\cite{CL08}.
In particular, we can find secret-sharing states that
provide each legitimate player with a secret bit,
which is insecure against dishonest players, as we will see below.
In other words, secret sharing
is not in general guaranteed on the secret-sharing states.

To see this, we first look at the secret-sharing conditions 
presented in Ref.~\cite{CL08}:
(i) The probability distributions of the players' secret bits 
must be unbiased, and perfectly correlated,
that is, if we let $p_{I}$ be the probability
that the $N$ players get the random bit string $I \in \mathbb{Z}_{2}^{N}$,
then $p_{I}=1/2^{N-1}$ for $I$ with even parity and $p_{I}=0$ for $I$ with odd parity.
(ii) Any eavesdropper cannot  obtain any information about the players' secret bits.
For $1 \le i \le N$,
let $A_{i}$ be the player $\mathcal{A}_{i}$'s qubit system,
and $A'_{i}$ be the $\mathcal{A}_{i}$'s another system with arbitrary dimension.
Then it has been shown~\cite{CL08} that
$\rho_{\bold{A}\bold{A'}}$ is a quantum state for secret sharing,
that is, a secret-sharing state,
where $\bold{A}=A_{1} \cdots A_{N}$ and $\bold{A'}=A'_{1} \cdots A'_{N}$
if and only if
it is of the form
\begin{equation}
\frac{1}{2^{N-1}} 
\sum_{\substack{I,J \in \mathbb{Z}_{2}^{N} \\ \textrm{even parity}}}
\ket{I}_{\bold{A}}\bra{J} 
\otimes
U_{I}\sigma_{\bold{A'}}U^{\dagger}_{J},
\nonumber
\end{equation}
where $\sigma_{\bold{A'}}$ is an arbitrary state,
and the $U_{I}$'s are unitary operators on the system $\bold{A'}$.
Here, $\bold{A}$ and $\bold{A'}$ are called 
the secret part and the shield part of the state, respectively.

Suppose that three players share the following secret-sharing state,
\begin{align}
\label{ex1}
\ket{\Upsilon_{1}}_{\bold{A}\bold{A'}}=
\frac{1}{2}
&(\ket{000}_{\bold{A}}\ket{000}_{\bold{A'}}
+\ket{011}_{\bold{A}}\ket{000}_{\bold{A'}} \nonumber \\
&+\ket{101}_{\bold{A}}\ket{001}_{\bold{A'}}
+\ket{110}_{\bold{A}}\ket{001}_{\bold{A'}}).
\end{align}
Then we can readily see that 
if $\mathcal{A}_{3}$ is a dishonest player,
then he/she can perfectly know the other players' secret information
by measuring his/her own secret part $A_{3}$ and shield part $A'_{3}$
in the computational basis.

There also exists a secret-sharing state
on which each legitimate player has an insecure secret bit against dishonest players,
even when dishonest players do not handle their shield parts.
It can be easily seen from the following secret-sharing state,
\begin{align*}
\ket{\Upsilon_{2}}_{\bold{A}\bold{A'}}
=\frac{1}{2\sqrt{2}}
&(\ket{0000}_{\bold{A}}\ket{0000}_{\bold{A'}}
+\ket{0011}_{\bold{A}}\ket{0000}_{\bold{A'}}\\
&+\ket{0101}_{\bold{A}}\ket{0000}_{\bold{A'}}
+\ket{0110}_{\bold{A}}\ket{0000}_{\bold{A'}}\\
&+\ket{1001}_{\bold{A}}\ket{0000}_{\bold{A'}}
-\ket{1010}_{\bold{A}}\ket{0000}_{\bold{A'}}\\
&+\ket{1100}_{\bold{A}}\ket{0000}_{\bold{A'}}
-\ket{1111}_{\bold{A}}\ket{0000}_{\bold{A'}}).
\end{align*}
In this case, $\mathcal{A}_{3}$ and $\mathcal{A}_{4}$ 
can totally know the other players' measurement outcomes
by measuring their secret parts $A_{3}A_{4}$ in the Bell basis,
and they can also deceive legitimate players
as if they measure their secret parts in the computational basis.

These problems are caused 
by the lack of sufficient consideration on dishonest players 
in the secret-sharing conditions.
Hence, in this work, we modify the secret-sharing conditions
to fully cover $(n,n)$-threshold secret sharing scenarios,
and introduce a class of quantum states
on which each player can obtain a classical information for secret sharing
secure against not only eavesdropper but also dishonest players.
We call them the genuine secret-sharing~(GSS) states.
In addition, we show that if $N$ players share an $N$-party GSS state, 
local operations and classical communication~(LOCC) enables
any $M$ players out of the $N$ players to share an $M$-party GSS state.
It can be an important property in a quantum network connected by repeaters,
since if the network consists of a GSS state 
then players can share their own GSS state with properly smaller size
without providing any information to the repeaters.

Furthermore, we define the distillable rate with respect to the GSS state,
and also show that the distillable rate is upper bounded by
the relative entropy of entanglement~\cite{REE97,REE98}
between any bipartition of the total players.
By using this property, we discuss the irreducible GSS state,
which players cannot have additional information for secret sharing 
from the shield part of the state via LOCC.

Our paper is organized as follows.
We introduce the GSS state, 
and investigate its properties in Sec.~\ref{GSS state and its properties}.
We define the GSS distillable rate in Sec.~\ref{Distillable GSS Rate},
and give examples of the GSS states in Sec.~\ref{Examples}.
Finally, we conclude our work with some discussion in Sec.~\ref{Conclusion}.


\section{Genuine secret-sharing state and its properties}
\label{GSS state and its properties}

To perfectly deal with $(n,n)$-threshold secret sharing scenarios,
we should also regard dishonest players 
as internal eavesdroppers who may conspire with an external one.
Thus we modify the secret-sharing conditions in Ref.~\cite{CL08} as follows.
\begin{itemize}
\item[(i)] 
The probability distributions of all players' secret information 
must be unbiased and perfectly correlated.
\item[(ii$'$)] 
Any eavesdropper and dishonest players cannot 
get any information about the legitimate players' secret information.
\end{itemize}
Since the modified conditions include the previous ones,
the quantum states on which players can have secret information 
that satisfies the modified conditions
also have the form of the secret-sharing states,
but they must be different from the states.
The difference can be seen in the following theorem.
From now on, we let $A_{i}$ be the qudit system for all $i$
to handle more general situations.

\begin{theorem}
\label{Thm1}
$\Upsilon_{\bold{A}\bold{A'}}$ is a quantum state
on which players can obtain secret information 
that obeys the modified secret-sharing conditions
by measuring their secret parts in the computational basis
if and only if 
for any bipartite split $\{\mathcal{P}_{1},\mathcal{P}_{2}\}$
of the players with $\left|\mathcal{P}_{1}\right|\ge 2$,
the given state $\Upsilon_{\bold{A}\bold{A'}}$ can be written as
\begin{align}
\label{GSS state}
\frac{1}{d^{N-1}}
\sum_{I_{1}I_{2}, J_{1}J_{2} \in \mathfrak{S}_{N}^{0}}
&\ket{I_{1}I_{2}}_{\bold{P}_{1}\bold{P}_{2}}\bra{J_{1}J_{2}} \nonumber \\
&\otimes
\left(U_{\bold{P}'_{1}}^{I_{1}}V^{I_{2}}_{\bold{A'}}\right)
\sigma_{\bold{A'}}
\left(U_{\bold{P}'_{1}}^{J_{1}}V^{J_{2}}_{\bold{A'}}\right)^{\dagger},
\end{align}
where 
$$\mathfrak{S}_{N}^{t}\equiv 
\left\{I=i_{1}i_{2} \cdots i_{N} \in \mathbb{Z}_{d}^{N}: 
\sum_{j=1}^{N}i_{j}\equiv t \pmod{d} \right\},$$
$\bold{P}_{k}$ and $\bold{P}_{k}^{'}$
are the secret part and the shield part of $\mathcal{P}_{k}$, respectively,
$\sigma_{\bold{A'}}$ is an arbitrary state,
and the $\left\{U_{\bold{P}'_{1}}^{I_{1}}\right\}$ and $\left\{V^{I_{2}}_{\bold{A'}}\right\}$
are unitary operators on the system $\bold{P}'_{1}$ and $\bold{A'}$, respectively.
We call the state the GSS state.
\end{theorem}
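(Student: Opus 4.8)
The plan is to prove both implications by first reducing to the secret-sharing form of Ref.~\cite{CL08} and then analyzing the twisting structure of the shield. Throughout I fix one bipartite split $\{\mathcal{P}_1,\mathcal{P}_2\}$ with $|\mathcal{P}_1|\ge 2$ and treat $\mathcal{P}_2$ as the dishonest coalition, possibly colluding with an external eavesdropper. Since condition~(ii$'$) contains condition~(ii), any admissible $\Upsilon$ already satisfies the original secret-sharing conditions, hence by the qudit analogue of the characterization quoted above it can be written as $\frac{1}{d^{N-1}}\sum_{I,J\in\mathfrak{S}_N^0}\ket{I}_{\mathbf{A}}\bra{J}\otimes W_I\sigma W_J^\dagger$ for unitaries $W_I$ on $\mathbf{A}'$. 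Writing $I=(I_1,I_2)$ according to the split, this is the controlled-unitary (``twisting'') state $W(\tau\otimes\sigma)W^\dagger$ with $W=\sum_I\ket{I}\bra{I}_{\mathbf{A}}\otimes W_I$ and $\tau$ the maximally correlated secret state supported on $\mathfrak{S}_N^0$. The content of the theorem is then that condition~(ii$'$) is equivalent to the factorization $W_{(I_1,I_2)}=U^{I_1}_{\mathbf{P}'_1}V^{I_2}_{\mathbf{A}'}$.

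For the ``if'' part I would verify the two conditions directly. Condition~(i) is immediate: the computational-basis diagonal of \eqref{GSS state} is $\frac{1}{d^{N-1}}\sum_{I\in\mathfrak{S}_N^0}\ket{I}\bra{I}_{\mathbf{A}}\otimes(\cdots)$, so the outcomes are uniform over $\mathfrak{S}_N^0$, i.e.\ unbiased and perfectly correlated with digit sum $\equiv 0$. For condition~(ii$'$) I would compute the state held by the adversary---the dishonest players' systems $\mathbf{P}_2\mathbf{P}'_2$ together with a purifying reference $R$ of $\sigma$---conditioned on the legitimate outcome $I_1$, i.e.\ trace out the protected shield $\mathbf{P}'_1$. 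Because $U^{I_1}_{\mathbf{P}'_1}$ acts only on the traced-out $\mathbf{P}'_1$, while $V^{I_2}_{\mathbf{A}'}$ depends only on $\mathcal{P}_2$'s own secret, the entire $I_1$-dependence is confined to the discarded system; the adversary's conditional state therefore depends on $I_1$ only through $\mathrm{sum}(I_1)$, which is all that perfect correlation already forces. This is exactly condition~(ii$'$).

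The ``only if'' part is the substantive direction. Starting from the twisting form with general $W_I$, I would first phrase condition~(ii$'$) as the statement that $\mathrm{Tr}_{\mathbf{P}'_1}\!\big[W_{(I_1,I_2)}\,\sigma\,W_{(I_1,I_2')}^\dagger\big]$ is independent of $I_1$ whenever $\mathrm{sum}(I_1)$ is held fixed, for all admissible $I_2,I_2'$. Taking a purification $\ket{\psi}_{\mathbf{A}'R}$ of $\sigma$ and setting $I_2'=I_2$, the diagonal part says that $W_{(I_1,I_2)}\ket{\psi}$ and $W_{(\tilde I_1,I_2)}\ket{\psi}$ have equal marginals on $\mathbf{P}'_2R$; by uniqueness of purifications they are related by a unitary $U_{\mathbf{P}'_1}$ acting on $\mathbf{P}'_1$ alone, so $W_{(\tilde I_1,I_2)}\ket{\psi}=U_{\mathbf{P}'_1}\,W_{(I_1,I_2)}\ket{\psi}$ for each fixed $I_2$. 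Fixing a reference string $I_1^\ast$ in each sum class yields $W_{(I_1,I_2)}\ket{\psi}=U^{I_1}_{\mathbf{P}'_1}\,W_{(I_1^\ast,I_2)}\ket{\psi}$, and I would define $V^{I_2}_{\mathbf{A}'}$ from $W_{(I_1^\ast,I_2)}$. Feeding this back into the off-diagonal ($I_2\neq I_2'$) conditions forces the $\mathbf{P}'_1$-factor to be the same unitary for every $I_2$, which is precisely what makes the assignment $I_1\mapsto U^{I_1}_{\mathbf{P}'_1}$ well defined and independent of $\mathcal{P}_2$'s data; assembling the pieces reproduces the stated form.

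I expect the main obstacle to be this last step---upgrading the per-$I_2$ local unitaries obtained from the diagonal terms to a single $I_2$-independent family $\{U^{I_1}_{\mathbf{P}'_1}\}$---together with the bookkeeping that the purification argument pins the unitaries down only on the support of $\sigma$ and only up to phases, so they must be extended to genuine unitaries on all of $\mathbf{P}'_1$ and $\mathbf{A}'$ with the residual phases absorbed into $\sigma$ and the $V^{I_2}_{\mathbf{A}'}$. I would also note that the hypothesis $|\mathcal{P}_1|\ge 2$ enters here: it guarantees that the secrets $I_1$ compatible with a fixed sum form a nontrivial family, so that protecting their joint value is a genuine constraint rather than a vacuous one.
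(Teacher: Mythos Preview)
Your overall strategy is sound and close in spirit to the paper's, but your anticipated ``main obstacle'' is a self-inflicted one. You apply purification uniqueness \emph{block by block} in $I_2$ (comparing $W_{(I_1,I_2)}\ket{\psi}$ and $W_{(\tilde I_1,I_2)}\ket{\psi}$ for each fixed $I_2$), which yields a family $U^{I_1,I_2}_{\mathbf P'_1}$ and then forces you into the delicate off-diagonal analysis to strip the $I_2$-dependence. If instead you apply purification uniqueness one level up---to the full post-measurement pure state
\[
\ket{\Phi^{I_1}}\;\propto\;\sum_{I_2}\ket{I_2}_{\mathbf P_2}\,(W_{(I_1,I_2)}\otimes\mathbb I_E)\ket{\psi}_{\mathbf A'E},
\]
whose marginal on $\mathbf P_2\mathbf P'_2E$ is exactly the adversary's state---then condition~(ii$'$) says $\ket{\Phi^{I_1}}$ and $\ket{\Phi^{\tilde I_1}}$ purify the same state, so a single unitary $U_{\mathbf P'_1}$ relates them; matching the orthogonal $\ket{I_2}_{\mathbf P_2}$ coefficients immediately gives $U\ket{\Psi_{I_1I_2}}=\ket{\Psi_{\tilde I_1I_2}}$ for \emph{all} $I_2$ at once. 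No off-diagonal bookkeeping is needed, and the support/phase issues reduce to the usual freedom in extending isometries.

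This is precisely the mechanism the paper uses (invoking it as the HJW theorem), but the paper organizes the argument differently: it specializes to the worst case $|\mathcal P_1|=2$, obtaining unitaries $\tilde U_{A'_kA'_l}$ supported on \emph{two} shield systems, and then chains these pairwise unitaries along any ordering of the players to build the factorization $U^{I_1}_{\mathbf P'_1}V^{I_2}_{\mathbf A'}$ for an arbitrary split. Your route---working directly at the level of a general bipartition---is a perfectly valid shortcut to the stated theorem, but the paper's pairwise reduction buys something extra: it yields Corollary~\ref{coro2} (the chain-of-two-site-unitaries form) for free, which is what drives the LOCC reduction in Theorem~\ref{Thm3}. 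The ``if'' direction is handled identically in both approaches.
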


\begin{proof}
We first give a proof of the forward direction.
Suppose that $\ket{\Psi}_{\bold{A}\bold{A'}E}$
is a purification of $\Upsilon_{\bold{A}\bold{A'}}$, that is,
\begin{equation}
\label{GSS state purif}
\ket{\Psi}_{\bold{A}\bold{A'}E}
=\sum_{I \in \mathbb{Z}_{d}^{N}}
\sqrt{p_{I}}\ket{I}_{\bold{A}}\ket{\Psi_{I}}_{\bold{A'}E},
\end{equation}
where $E$ is the reference system for the purification,
which can be considered as the system of the external eavesdropper.
From the condition (i), we have
$p_{I}=1/d^{N-1}$ for $I \in \mathfrak{S}_{N}^{0}$
and $p_{I}=0$ for $I \notin \mathfrak{S}_{N}^{0}$.

For a dealer $\mathcal{A}_{k}$,
the worst case is that 
the other players except one player $\mathcal{A}_{l}~(l \ne k)$ are dishonest.
In this case, by changing the order of the systems,
we can rewrite the state $\ket{\Psi}_{\bold{A}\bold{A'}E}$ as follows:
\begin{equation}
\ket{\Psi}_{A_{k}A_{l}\bold{D}\bold{A'}E}
=\frac{1}{\sqrt{d^{2}}}
\sum_{i_{k},i_{l} \in \mathbb{Z}_{d}}
\ket{i_{k},i_{l}}_{A_{k}A_{l}}\ket{\eta_{i_{k},i_{l}}}_{\bold{D}\bold{A'}E},
\nonumber
\end{equation}
where $\bold{D}$ is the secret part of the dishonest players
and
$$\ket{\eta_{i_{k},i_{l}}}_{\bold{D}\bold{A'}E}=
\frac{1}{\sqrt{d^{N-3}}}
\sum_{\xi \in \mathfrak{S}_{N-2}^{-i_{k}-i_{l}}}
\ket{\xi}_{\bold{D}}\ket{\Psi_{i_{k},i_{l},\xi}}_{\bold{A'}E}.$$
Let $i_{k}$ be $\mathcal{A}_{k}$'s measurement outcome,
then the quantum state of dishonest players and eavesdropper after the measurement becomes
$$\Upsilon_{\bold{DD'}E}^{i_{k}}
=\frac{1}{d}\sum_{i_{l}\in \mathbb{Z}_{d}}
{\rm{tr}}_{A'_{k}A'_{l}}
\ket{\eta_{i_{k},i_{l}}}_{\bold{DA'}E}\bra{\eta_{i_{k},i_{l}}},$$
where $\bold{D'}$ is the shield part of the dishonest players.
Since the eavesdropper and dishonest players 
cannot get any information about the $\mathcal{A}_{k}$'s outcome,
$\Upsilon_{\bold{DD'}E}^{i_{k}}=\Upsilon_{\bold{DD'}E}^{i'_{k}}$
for any $i_{k},i'_{k} \in \mathbb{Z}_{d}$.
We note that 
${\rm{tr}}_{A'_{k}A'_{l}}
\ket{\eta_{i_{k},i_{l}}}_{\bold{DA'}E}
\bra{\eta_{i_{k},i_{l}}}$
is written as
\begin{align*}
\frac{1}{d^{N-3}}
\sum_{\xi,\xi' \in \mathfrak{S}_{N-2}^{-i_{k}-i_{l}}}
&\ket{\xi}_{\bold{D}}\bra{\xi'} \\
&\otimes
{\rm{tr}}_{A'_{k}A'_{l}}
\ket{\Psi_{i_{k},i_{l},\xi}}_{\bold{A'}E}\bra{\Psi_{i_{k},i_{l},\xi'}}.
\end{align*}
Hence, $\Upsilon_{\bold{DD'}E}^{i_{k}}=\Upsilon_{\bold{DD'}E}^{i'_{k}}$ implies that 
if $i_{l}$ and $i'_{l}$ satisfy $i_{k}+i_{l}=i'_{k}+i'_{l} \pmod{d}$,
then
$${\rm{tr}}_{A'_{k}A'_{l}}\ket{\eta_{i_{k},i_{l}}}_{\bold{DA'}E}
\bra{\eta_{i_{k},i_{l}}}=
{\rm{tr}}_{A'_{k}A'_{l}}\ket{\eta_{i'_{k},i'_{l}}}_{\bold{DA'}E}
\bra{\eta_{i'_{k},i'_{l}}}.$$
It follows from Hughston-Jozsa-Wooters theorem~\cite{HJW}
that for $i_{k},i'_{k},i_{l},i'_{l} \in \mathbb{Z}_{d}$, 
if $i_{k}+i_{l}=i'_{k}+i'_{l} \pmod{d}$,
there is a unitary operator $\tilde{U}_{A'_{k}A'_{l}}^{i_{k},i_{l} \to i'_{k},i'_{l}}$
on the system $A'_{k}A'_{l}$ such that
\begin{equation}
\label{relation 1}
\tilde{U}_{A'_{k}A'_{l}}^{i_{k},i_{l} \to i'_{k},i'_{l}}
\ket{\Psi_{i_{k},i_{l},\xi}}_{\bold{A'}E}
=\ket{\Psi_{i'_{k},i'_{l},\xi}}_{\bold{A'}E}
\end{equation}
for all $\xi \in \mathfrak{S}_{N-2}^{-i_{k}-i_{l}}$.

Let us now divide the players into two parties,
$\mathcal{P}_{1}$ and $\mathcal{P}_{2}$,
where $\left|\mathcal{P}_{1}\right|=M\ge 2$.
Then Eq.~(\ref{GSS state purif}) can be rewritten as
\begin{equation}
\label{GSS state purif 2}
\ket{\Psi}_{\bold{P}_{1}\bold{P}_{2}\bold{A'}E}
=\frac{1}{\sqrt{d^{N-1}}}
\sum_{I_{1}I_{2} \in \mathfrak{S}_{N}^{0}}
\ket{I_{1}I_{2}}_{\bold{P}_{1}\bold{P}_{2}}
\ket{\Psi_{I_{1}I_{2}}}_{\bold{A'}E}.
\end{equation}
Here, all possible cases of secret sharing,
including the case where $\mathcal{P}_{2}$ is the party of the dishonest players,
should be considered.
Thus, by symmetry and the Eq.~(\ref{relation 1}),
it can be shown that
there are unitary operators $U_{\bold{P}'_{1}}^{I_{1}}$
and $V_{\bold{A'}}^{I_{2}}$
such that
\begin{align*}
\ket{\Psi_{I_{1}I_{2}}}_{\bold{A'}E}
&=U_{\bold{P}'_{1}}^{I_{1}}
\ket{\Psi_{I_{1}^{\alpha}I_{2}}}_{\bold{A'}E} \\
&=U_{\bold{P}'_{1}}^{I_{1}}
V_{\bold{A'}}^{I_{2}}
\ket{\Psi_{00\cdots 0}}_{\bold{A'}E}
\end{align*}
for $I_{1} \in \mathfrak{S}_{M}^{\alpha}$
and $I_{2} \in \mathfrak{S}_{N-M}^{d-\alpha}$,
where $I_{1}^{\alpha}=0 \cdots 0\alpha \in \mathfrak{S}_{M}^{\alpha}$.
For instance, if $I=i_{1}i_{2}\cdots i_{N} \in \mathfrak{S}_{N}^{0}$,
then
\begin{equation}
\label{instance}
\ket{\Psi_{I}}_{\bold{A'}E}
=U_{A'_{1}A'_{2}}^{i_{1},i_{2}}
U_{A'_{2}A'_{3}}^{j_{2},i_{3}}
\cdots
U_{A'_{N-1}A'_{N}}^{j_{N-1},i_{N}}
\ket{\Psi_{00\cdots 0}}_{\bold{A'}E},
\end{equation}
where $U_{A'_{k}A'_{l}}^{i_{k},i_{l}}=
\left(\tilde{U}_{A'_{k}A'_{l}}^{i_{k},i_{l} \to 0,i_{k}+i_{l}}\right)^{\dagger}$
and $j_{t}\equiv i_{1}+\cdots +i_{t} \pmod{d}$.
Therefore, if we let 
${\rm{tr}}_{\bold{A'}}
\left( \ket{\Psi_{00\cdots 0}}\bra{\Psi_{00\cdots 0}}\right)
=\sum_{x}\lambda_{x}\ket{\phi_{x}}_{E}\bra{\phi_{x}}$
be its spectral decomposition,
we have
\begin{equation}
\ket{\Psi_{I_{1}I_{2}}}_{\bold{A'}E}
=\sum_{x}\sqrt{\lambda_{x}}
U_{\bold{P}'_{1}}^{I_{1}}V_{\bold{A'}}^{I_{2}}
\ket{\psi_{x}}_{\bold{A'}}\ket{\phi_{x}}_{E},
\nonumber
\end{equation}
where $\{\ket{\psi_{x}}\}$ forms an orthonormal set for the system $\bold{A'}$,
and thus $\Upsilon_{\bold{A}\bold{A'}}$ has the form in Eq.~(\ref{GSS state}).

Conversely, we now assume that
for any bipartite split $\{\mathcal{P}_{1},\mathcal{P}_{2}\}$ of the players
with $\left|\mathcal{P}_{1}\right|=M\ge 2$,
the given state $\Upsilon_{\bold{A}\bold{A'}}$
is of the form in Eq.~(\ref{GSS state}).
Then it can be readily checked that
players have secret information that
satisfies the condition (i)
by measuring their secret parts in the computational basis.
It remains to show that 
the secret information satisfies the condition (ii$'$).

Suppose that $\mathcal{P}_{1}$ and $\mathcal{P}_{2}$
are parties of legitimate players and dishonest players, respectively.
Let $\sigma_{\bold{A'}}
=\sum_{x}\lambda_{x}\ket{\mu_{x}}\bra{\mu_{x}}$
be a spectral decomposition of $\sigma_{\bold{A'}}$,
and
$$\ket{\Psi_{I_{1}I_{2}}}_{\bold{A'}E}
=\sum_{x}\sqrt{\lambda_{x}}
U_{\bold{P}'_{1}}^{I_{1}}\
V_{\bold{A'}}^{I_{2}}
\ket{\mu_{x}}_{\bold{A'}}\ket{\nu_{x}}_{E},$$
where $\left\{\ket{\nu_{x}}_{E}\right\}$ 
is an orthonormal set for the system $E$.
Then we can see that
the state of the form in Eq.~(\ref{GSS state purif 2})
is a purification of $\Upsilon_{\bold{A}\bold{A'}}$.

Let $I_{1} \in \mathfrak{S}_{M}^{\alpha}$ be
the legitimate players' measurement outcome 
after measuring their secret parts,
then the eavesdropper and dishonest players'
state after the measurement becomes
\begin{align*}
\Upsilon_{\bold{P}_{2}\bold{P}'_{2}E}^{I_{1}}
=\frac{1}{d^{N-M-1}}\sum_{I_{2},I'_{2} \in \mathfrak{S}_{N-M}^{d-\alpha}}
&\ket{I_{2}}_{\bold{P}_{2}}\bra{I'_{2}}\\
\otimes
&{\rm{tr}}_{\bold{P}'_{1}}
\ket{\Psi_{I_{1}I_{2}}}_{\bold{A'}E}\bra{\Psi_{I_{1}I'_{2}}}
\end{align*}
for $2 \le M \le N-1$ and
$\Upsilon_{E}^{I_{1}}
={\rm{tr}}_{\bold{A}'}
\ket{\Psi_{I_{1}}}_{\bold{A'}E}\bra{\Psi_{I_{1}}}$ for $M=N$.
Since this state does not depend on
the unitary operator $U_{\bold{P}'_{1}}^{I_{1}}$,
$\Upsilon_{\bold{P}_{2}\bold{P}'_{2}E}^{I_{1}}=
\Upsilon_{\bold{P}_{2}\bold{P}'_{2}E}^{I'_{1}}$
for any $I_{1}, I'_{1} \in \mathfrak{S}_{M}^{\alpha}$.
This means that the legitimate players' secret information is secure 
against dishonest players and eavesdropper.
\end{proof}

As seen in Eq.~(\ref{instance}),
unitary operators on the shield part of the GSS state
can be expressed as the product of unitary operators
acting on two players' shield parts.
\begin{corollary}
\label{coro2}
For any rearranged order $l_{1}l_{2} \cdots l_{N}$,
the GSS state can be written as
\begin{equation}
\label{GSSform2}
\frac{1}{d^{N-1}}
\sum_{I, J\in \mathfrak{S}_{N}^{0}}
\ket{I}_{A_{l_{1}}A_{l_{2}} \cdots A_{l_{N}}} \bra{J}
\otimes
U_{I} \sigma_{\bold{A'}}U_{J}^{\dagger},
\end{equation}
where
$U_{I}=U_{A'_{l_{1}}A'_{l_{2}}}^{i_{l_{1}},i_{l_{2}}}
U_{A'_{l_{2}}A'_{l_{3}}}^{j_{l_{2}},i_{l_{3}}}
\cdots
U_{A'_{l_{N-1}}A'_{l_{N}}}^{j_{l_{N-1}},i_{l_{N}}}$
for some unitary operators
$U_{A'_{l_{1}}A'_{l_{2}}}^{i_{l_{1}},i_{l_{2}}},
U_{A'_{l_{2}}A'_{l_{3}}}^{j_{l_{2}},i_{l_{3}}},
\cdots
U_{A'_{l_{N-1}}A'_{l_{N}}}^{j_{l_{N-1}},i_{l_{N}}}$
with $j_{l_{t}}\equiv i_{l_{1}}+\cdots +i_{l_{t}} \pmod{d}$.
\end{corollary}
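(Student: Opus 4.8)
The plan is to reuse the machinery already assembled in the proof of Theorem~\ref{Thm1}, noting that the only order-dependent ingredient there was the chain along which the partial sums were accumulated in Eq.~(\ref{instance}). First I would record the structural fact that the pairwise transport relation~(\ref{relation 1}) holds for \emph{every} pair of players $(k,l)$, not merely for one distinguished pair: in that proof $\mathcal{A}_{k}$ was an arbitrary dealer and $\mathcal{A}_{l}$ an arbitrary remaining honest player, so by symmetry, for each pair there is a unitary $\tilde{U}_{A'_{k}A'_{l}}^{i_{k},i_{l}\to i'_{k},i'_{l}}$ acting only on $A'_{k}A'_{l}$ with $\tilde{U}_{A'_{k}A'_{l}}^{i_{k},i_{l}\to i'_{k},i'_{l}}\ket{\Psi_{\dots i_{k}\dots i_{l}\dots}}_{\bold{A'}E}=\ket{\Psi_{\dots i'_{k}\dots i'_{l}\dots}}_{\bold{A'}E}$ whenever $i_{k}+i_{l}\equiv i'_{k}+i'_{l}\pmod{d}$ and the other indices are held fixed. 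Crucially, this unitary is independent of those other indices, which is exactly the uniformity supplied by the Hughston-Jozsa-Wootters theorem and is what makes a chain decomposition possible in the first place.

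Next I would run the telescoping argument behind Eq.~(\ref{instance}) along the prescribed order $l_{1}l_{2}\cdots l_{N}$ rather than the natural order. Writing $U_{A'_{k}A'_{l}}^{a,b}=(\tilde{U}_{A'_{k}A'_{l}}^{a,b\to 0,a+b})^{\dagger}$, the unitary $U_{A'_{k}A'_{l}}^{a,b}$ transports the configuration $(0,a+b)$ at positions $(k,l)$ to $(a,b)$. I would then prove by downward induction on $t$ that the partial product $U_{A'_{l_{t}}A'_{l_{t+1}}}^{j_{l_{t}},i_{l_{t+1}}}\cdots U_{A'_{l_{N-1}}A'_{l_{N}}}^{j_{l_{N-1}},i_{l_{N}}}$ applied to $\ket{\Psi_{0\cdots 0}}_{\bold{A'}E}$ produces the purification component whose index string carries the partial sum $j_{l_{t}}\equiv i_{l_{1}}+\cdots+i_{l_{t}}$ at position $l_{t}$, the final value $i_{l_{s}}$ at each position $l_{s}$ with $s>t$, and $0$ elsewhere. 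The base case $t=N$ is immediate because $j_{l_{N}}\equiv 0\pmod{d}$ forces that string to be all zeros; the inductive step applies the transport relation at the single pair $(l_{t},l_{t+1})$, using the identity $j_{l_{t}}+i_{l_{t+1}}\equiv j_{l_{t+1}}$ to certify that the incoming configuration $(0,j_{l_{t+1}})$ is precisely the source on which $U_{A'_{l_{t}}A'_{l_{t+1}}}^{j_{l_{t}},i_{l_{t+1}}}$ acts. Taking $t=1$ and using $j_{l_{1}}=i_{l_{1}}$ then yields $U_{I}\ket{\Psi_{0\cdots 0}}_{\bold{A'}E}=\ket{\Psi_{I}}_{\bold{A'}E}$ with $U_{I}$ the full chain product.

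Finally I would substitute this into the purification~(\ref{GSS state purif}), relabel the computational basis of the secret part in the order $A_{l_{1}}\cdots A_{l_{N}}$—harmless because the defining condition $\sum_{j}i_{j}\equiv 0\pmod{d}$ of $\mathfrak{S}_{N}^{0}$ is permutation invariant—and trace out the reference $E$, writing $\sigma_{\bold{A'}}=\mathrm{tr}_{E}\ket{\Psi_{0\cdots 0}}\bra{\Psi_{0\cdots 0}}$, to reach the form in Eq.~(\ref{GSSform2}). The hard part will be the consistency of the telescoping rather than any new idea: one must verify at every step that each pairwise unitary acts only on the two shield systems named in its subscript and leaves both the already-fixed positions $l_{s}$ $(s>t)$ and the still-untouched positions $l_{s}$ $(s<t)$ intact, and that each intermediate index string remains in $\mathfrak{S}_{N}^{0}$ so that the transport relation remains applicable. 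Beyond this bookkeeping no input beyond the proof of Theorem~\ref{Thm1} is needed, since the chain decomposition is manifestly symmetric under relabeling of the players.
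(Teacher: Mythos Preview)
Your proposal is correct and follows exactly the route the paper intends: the paper offers no separate proof of Corollary~\ref{coro2}, treating it as an immediate consequence of Eq.~(\ref{instance}) in the proof of Theorem~\ref{Thm1}, and your argument simply makes explicit the telescoping along an arbitrary order $l_{1}\cdots l_{N}$ that the paper leaves implicit. The only addition on your side is the careful inductive bookkeeping, which is welcome but introduces no new idea beyond the paper's own reasoning.
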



Let us now investigate the properties of the GSS state.
We remark that if $N$ players share the $N$-party GHZ state, 
then any $M$ players among the total players 
can share the $M$-party GHZ state by all players' LOCC.
We can see from the following theorem that
the GSS state has the similar property.

\begin{theorem}
\label{Thm3}
Suppose that players share a GSS state $\Upsilon_{\bold{A}\bold{A'}}$.
Then for any bipartite split $\{\mathcal{P}_{1},\mathcal{P}_{2}\}$
of the players with $\left|\mathcal{P}_{1}\right|=M\ge 2$,
$\mathcal{P}_{1}$ can share a GSS state by means of LOCC.
\end{theorem}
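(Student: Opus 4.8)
The plan is to exhibit an explicit LOCC protocol in which the players of $\mathcal{P}_2$ measure away their secret parts and discard their shields, after which the state left to $\mathcal{P}_1$ is, up to a local correction, an $M$-party GSS state. The natural starting point is Corollary~\ref{coro2}: I would reorder the players so that $\mathcal{P}_1=\{l_1,\dots,l_M\}$ occupies the first $M$ slots of the chain and $\mathcal{P}_2=\{l_{M+1},\dots,l_N\}$ the last $N-M$ slots, and write $\Upsilon_{\bold{A}\bold{A'}}$ in the chain form of Eq.~(\ref{GSSform2}), so that the shield unitary $U_I$ factorizes into two-body unitaries along the chain $l_1-l_2-\cdots-l_N$. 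The protocol is then: each player in $\mathcal{P}_2$ measures their secret qudit in the computational basis and broadcasts the outcome; a designated player of $\mathcal{P}_1$, say $l_1$, applies a generalized shift $\ket{k}\mapsto\ket{k+c}$ on $A_{l_1}$ to repair the parity; and the players of $\mathcal{P}_2$ discard their shield parts.

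The crux is to show that this measurement decouples $\mathcal{P}_2$'s shields without spoiling the coherence that $\mathcal{P}_1$ needs. First I would observe that measuring $A_{l_{M+1}}\cdots A_{l_N}$ in the computational basis retains full coherence among the indices $i_{l_1},\dots,i_{l_M}$: only the summands with $I_2=J_2$ equal to the measured string $c_2$ survive, while every $I_1,J_1$ of the correct total parity remains. Writing $\gamma\equiv-\sum_{t>M} i_{l_t}\pmod d$ for the measured value, the surviving index strings satisfy $I_1,J_1\in\mathfrak{S}_M^{\gamma}$. The key observation is that the boundary partial sum is now pinned: since $I\in\mathfrak{S}_N^0$ forces $j_{l_M}=i_{l_1}+\cdots+i_{l_M}=-\sum_{t>M}i_{l_t}=\gamma$, every two-body unitary from the $(l_M,l_{M+1})$ link onward, namely $U^{j_{l_M},i_{l_{M+1}}}_{A'_{l_M}A'_{l_{M+1}}}\cdots U^{j_{l_{N-1}},i_{l_N}}_{A'_{l_{N-1}}A'_{l_N}}$, has all of its indices frozen by the outcome. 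Hence this tail collapses to a single fixed unitary $W$ acting on $A'_{l_M}A'_{l_{M+1}}\cdots A'_{l_N}$, and $U_I=V_{I_1}W$, where $V_{I_1}$ is the product of the remaining links and acts only on $\mathcal{P}_1$'s shields $A'_{l_1}\cdots A'_{l_M}$.

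With this factorization the rest is bookkeeping. I would absorb $W$ into the shield state by setting $\sigma'=W\sigma_{\bold{A'}}W^{\dagger}$; since $V_{I_1}$ does not act on $\mathcal{P}_2$'s shields, tracing out $A'_{l_{M+1}}\cdots A'_{l_N}$ merely replaces $\sigma'$ by $\tilde\sigma={\rm tr}_{A'_{l_{M+1}}\cdots A'_{l_N}}\sigma'$ on $\mathcal{P}_1$'s shields, leaving, after renormalizing by the outcome probability $1/d^{N-M}$,
\[
\frac{1}{d^{M-1}}\sum_{I_1,J_1\in\mathfrak{S}_M^{\gamma}}
\ket{I_1}_{\bold{P}_1}\bra{J_1}\otimes V_{I_1}\tilde\sigma V_{J_1}^{\dagger}.
\]
Finally, the shift on $A_{l_1}$ relabels $I_1\in\mathfrak{S}_M^{\gamma}$ to $\mathfrak{S}_M^{0}$, and a short check shows the resulting two-body unitaries still have the chain form, since the shift only translates the first index of each link by the constant $\gamma$; by Corollary~\ref{coro2} the state is an $M$-party GSS state.

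I expect the main obstacle to be exactly the decoupling step: a priori, tracing out $\mathcal{P}_2$'s shields could destroy the coherence in $I_1$, because the cross-boundary link $U^{j_{l_M},i_{l_{M+1}}}_{A'_{l_M}A'_{l_{M+1}}}$ entangles $A'_{l_M}$ with $A'_{l_{M+1}}$ and depends on $j_{l_M}$, which in general varies with $I_1$. What rescues the argument is the parity constraint $\sum_t i_{l_t}\equiv 0$, which freezes $j_{l_M}$ to the single value $\gamma$ once $\mathcal{P}_2$'s outcomes are known; I would emphasize that this is precisely where condition~(i) of the secret-sharing conditions enters. A secondary point to verify carefully is that the parity-repairing shift genuinely returns the state to the canonical chain form rather than to something that merely resembles it; alternatively, one can bypass the all-at-once analysis and instead peel off one player of $\mathcal{P}_2$ at a time, applying the same decoupling observation inductively on $N-M$.
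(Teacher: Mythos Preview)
Your protocol and the decoupling step are exactly the paper's: use Corollary~\ref{coro2} to put the chain order with $\mathcal{P}_2$ at the tail, measure $\mathcal{P}_2$'s secret parts, observe that the parity constraint freezes $j_{l_M}$ and hence the tail unitary $W$, absorb $W$ into $\sigma$, trace out $\mathcal{P}_2$'s shields, and repair the parity with a local shift. All of that is correct.

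The gap is in the last sentence. You end with a state in the chain form of Corollary~\ref{coro2} for the \emph{single} ordering $l_1,\dots,l_M$, and then write ``by Corollary~\ref{coro2} the state is an $M$-party GSS state.'' But Corollary~\ref{coro2} is stated only in the direction GSS $\Rightarrow$ chain form; you are using the converse, and the converse for a single ordering is false. A concrete counterexample is essentially the paper's own Eq.~(\ref{ex1}): take $N=3$, $d=2$, pure shield state $\sigma=\ket{000}\bra{000}$, and chain unitaries $U_{12}^{i_1,i_2}=X_{A'_2}^{\,i_1}$, $U_{23}^{j_2,i_3}=\mathrm{id}$. This is literally in chain form for the order $1,2,3$, yet player $2$ can read off $i_1$ from $A'_2$, so it is not a GSS state. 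Hence showing chain form for one ordering of $\mathcal{P}_1$ does not finish the proof.

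The paper closes this gap by building the required split into the ordering from the start. It fixes an \emph{arbitrary} bipartition $\{\mathcal{P}_L,\mathcal{P}_D\}$ of $\mathcal{P}_1$ with $|\mathcal{P}_L|\ge 2$, applies Corollary~\ref{coro2} to the original GSS state with the ordering $\mathcal{P}_L,\mathcal{P}_D,\mathcal{P}_2$, and then runs your protocol. Now two boundary partial sums are frozen: $j_{l_M}=d-\beta$ by the global parity (your observation), and also $j_{l_K}=d-\beta-\sum_{t=K+1}^{M} i_{l_t}$, which depends only on $I_D$. Thus the surviving unitary factorizes as $U^{I_L}_{\mathbf{P}'_L}\, V^{I_D}_{\mathbf{P}'_1}$, which is precisely the form of Eq.~(\ref{GSS state}) for the split $\{\mathcal{P}_L,\mathcal{P}_D\}$. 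Since this split was arbitrary, Theorem~\ref{Thm1} gives that the post-measurement state is GSS. Your argument becomes correct once you quantify over the internal split of $\mathcal{P}_1$ in this way rather than appealing to Corollary~\ref{coro2}.
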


\begin{proof}
Let us divide $\mathcal{P}_{1}$
into legitimate players' party $\mathcal{P}_{L}$
and dishonest players' party $\mathcal{P}_{D}$
with $\left|\mathcal{P}_{L}\right|=K \ge 2$.
By rearranging the order,
let $\mathcal{P}_{L}=\{\mathcal{A}_{l_{1}},\cdots,\mathcal{A}_{l_{K}}\}$,
$\mathcal{P}_{D}=\{\mathcal{A}_{l_{K+1}},\cdots,\mathcal{A}_{l_{M}}\}$,
and $\mathcal{P}_{2}=\{\mathcal{A}_{l_{M+1}},\cdots,\mathcal{A}_{l_{N}}\}$.
Since $\Upsilon_{\bold{A}\bold{A'}}$ is a GSS state,
it follows from the Corollary~\ref{coro2}
that $\Upsilon_{\bold{A}\bold{A'}}$ is written as in Eq.~(\ref{GSSform2}).

Assume that players in $\mathcal{P}_{2}$ 
measure their secret parts in the computational basis,
and have the measurement outcome 
$I_{2}=m_{l_{M+1}} \cdots m_{l_{N}} \in \mathfrak{S}_{N-M}^{\beta}$.
Then $j_{l_{K}} = d-\beta-(i_{l_{K+1}}+ \cdots +i_{l_{M}})$,
$j_{l_{M}} = d-\beta$,
and the post-measurement state becomes as follows:
\begin{align*}
\frac{1}{d^{M-1}}
&\sum_{I_{L}I_{D}, J_{L}J_{D} \in \mathfrak{S}_{M}^{d-\beta}}
\ket{I_{L}I_{D}I_{2}}_{\bold{P}_{L}\bold{P}_{D}\bold{P}_{2}}\bra{J_{L}J_{D}I_{2}}\\
&~~~~~~~~~~~\otimes
\left(U_{\bold{P}'_{L}}^{I_{L}}V^{I_{D}}_{\bold{P}'_{1}}W_{\bold{A'}}^{I_{2}}\right)
\sigma_{\bold{A'}}
\left(U_{\bold{P}'_{L}}^{J_{L}}V^{J_{D}}_{\bold{P}'_{1}}W_{\bold{A'}}^{I_{2}}\right)^{\dagger},
\end{align*}
where
$$U_{\bold{P}'_{L}}^{I_{L}}=
U_{A'_{l_{1}}A'_{l_{2}}}^{i_{l_{1}},i_{l_{2}}}
U_{A'_{l_{2}}A'_{l_{3}}}^{j_{l_{2}},i_{l_{3}}}
\cdots
U_{A'_{l_{K-1}}A'_{l_{K}}}^{j_{l_{K-1}},i_{l_{K}}},$$
$$V_{\bold{P}'_{1}}^{I_{D}}=
U_{A'_{l_{K}}A'_{l_{K+1}}}^{j_{l_{K}},i_{l_{K+1}}}
U_{A'_{l_{K+1}}A'_{l_{K+2}}}^{j_{l_{K+1}},i_{l_{K+2}}}
\cdots
U_{A'_{l_{M-1}}A'_{l_{M}}}^{j_{l_{M-1}},i_{l_{M}}},$$
and
$$W_{\bold{A'}}^{I_{2}}=
U_{A'_{l_{M}}A'_{l_{M+1}}}^{j_{l_{M}},m_{l_{M+1}}}
U_{A'_{l_{M+1}}A'_{l_{M+2}}}^{j_{l_{M+1}},m_{l_{M+2}}}
\cdots
U_{A'_{l_{N-1}}A'_{l_{N}}}^{j_{l_{N-1}},m_{l_{N}}}.$$
By tracing out the $\mathcal{P}_{2}$'s system,
we can see that
$\mathcal{P}_{1}$'s state is written as
\begin{align*}
\frac{1}{d^{M-1}}\sum_{I_{L}I_{D}, J_{L}J_{D} \in \mathfrak{S}_{M}^{d-\beta}}
&\ket{I_{L}I_{D}}_{\bold{P}_{L}\bold{P}_{D}}\bra{J_{L}J_{D}}\\
&\otimes
\left(U_{\bold{P}'_{L}}^{I_{L}}V^{I_{D}}_{\bold{P}'_{1}}\right)
\tilde{\sigma}_{\bold{P}'_{1}}
\left(U_{\bold{P}'_{L}}^{J_{L}}V^{J_{D}}_{\bold{P}'_{1}}\right)^{\dagger}
\end{align*}
for some state $\tilde{\sigma}_{\bold{P}'_{1}}$.
Hence, after one of players $\mathcal{A}_{k}$ in $\mathcal{P}_{1}$
applies unitary operator $T_{\beta}=\sum_{i}\ket{i+\beta}\bra{i}$ on his/her secret part,
their state becomes a GSS state.
\end{proof}

Since quantum networks in general require repeaters connecting players 
because of the distance limitation of quantum communication,
and players want to get a secure information 
without providing their information to repeaters,
this property can play an important role in quantum networks.
Hence, in quantum networks,
sharing a GSS state can be a goal
for secure multipartite quantum communication.

The next property is about a relation 
between the GSS state and the Holevo information~\cite{H73},
which is one of the important quantities in security analysis.
Suppose that Alice and Bob share a state $\rho_{AB}$,
and Alice measures her system.
For each Alice's measurement outcome $x$,
let $p_{x}$ and $\rho_{B}^{x}$
be its probability and Bob's resulting state after her measurement, respectively.
The Holevo information between Alice's measurement outcomes and Bob's state
is given by
$$\chi_{\rho}(x:B)=S(\bar{\rho})-\sum_{x}p_{x}S(\rho_{B}^{x}),$$
where $S$ is the von Neumann entropy and
$\bar{\rho}=\sum_{x}p_{x}\rho_{B}^{x}$.
Zero Holevo information between Alice's measurement outcomes and Bob's state, 
that is, $\chi_{\rho}(x:B)=0$
means that Bob cannot have any information about Alice's measurement outcomes.
Thus, the modified condition (ii$'$) can be replaced as follows:
For a given quantum state $\Upsilon$,
$\chi_{\Upsilon}(k_{l}:\mathcal{D}E)=0$
for any party $\mathcal{D}E$ of dishonest players and eavesdropper,
where $k_{l}$ is the the legitimate player $\mathcal{A}_{l}$'s secret information.
In other words,
the Holevo information is zero if and only if
the state satisfying the condition (i) for secret sharing is a GSS state.
For example, since $\chi(i_{1}:\mathcal{A}_{3})=1$ in Eq.~(\ref{ex1}), 
where $i_{1}$ is the $\mathcal{A}_{1}$'s measurement outcome
when he/she measures his/her secret part in the computational basis,
the secret-sharing state in Eq.~(\ref{ex1}) is not a GSS state.

\begin{theorem}
\label{Thm4}
Let $\Gamma_{\bold{A}\bold{A'}}$ be a quantum state
and $p_{I}$ be the probability that
players' measurement outcome is $I$
when they measure the system $\bold{A}$ in the computational basis.
Suppose that $p_{I}>0$ for $I \in \mathfrak{S}_{N}^{0}$
and $p_{I}=0$ for $I \notin \mathfrak{S}_{N}^{0}$.
Then $\Gamma_{\bold{A}\bold{A'}}$ is a GSS state
if and only if 
for any $\mathcal{D} \subset \{\mathcal{A}_{1},\cdots,\mathcal{A}_{N}\}$ 
with $|\mathcal{D}|=N-2$,
the Holevo information $\chi_{\Gamma}(i_{k}:\mathcal{D}E)$ equals zero,
where $i_{k}$ is the measurement outcome of 
the $\mathcal{A}_{k}$'s secret part and $\mathcal{A}_{k} \notin \mathcal{D}$. 
\end{theorem}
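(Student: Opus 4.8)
The plan is to prove both implications by reducing to Theorem~\ref{Thm1}, using the fact that $\chi_\Gamma(i_k:\mathcal{D}E)=0$ holds exactly when the normalized state of $\mathcal{D}E$, conditioned on $\mathcal{A}_k$'s outcome $i_k$, is independent of $i_k$ over the outcomes of positive probability. Throughout I would work with the purification $\ket{\Psi}_{\mathbf{A}\mathbf{A'}E}=\sum_{I\in\mathfrak{S}_N^0}\sqrt{p_I}\ket{I}_{\mathbf{A}}\ket{\Psi_I}_{\mathbf{A'}E}$ as in Eq.~(\ref{GSS state purif}), so that measuring $A_k$ and discarding the remaining honest player reduces each side of the claimed equivalence to a statement about conditional states.

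For the forward direction I would assume $\Gamma_{\mathbf{A}\mathbf{A'}}$ is a GSS state, fix $\mathcal{D}$ with $|\mathcal{D}|=N-2$, and let $\mathcal{A}_k,\mathcal{A}_l$ be the two players outside $\mathcal{D}$. Applying the converse part of Theorem~\ref{Thm1} to the bipartition $\mathcal{P}_1=\{\mathcal{A}_k,\mathcal{A}_l\}$, $\mathcal{P}_2=\mathcal{D}$, the state of the dishonest players and eavesdropper after $\mathcal{P}_1$ measures both secret parts depends only on the sum $i_k+i_l \bmod d$. Since tracing out $\mathcal{A}_l$ equals averaging over $i_l$ with the uniform conditional weights supplied by condition~(i), the resulting state of $\mathcal{D}E$ becomes $\tfrac1d\sum_\alpha\Upsilon^{(\alpha)}$, which no longer depends on $i_k$; hence $\chi_\Gamma(i_k:\mathcal{D}E)=0$.

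For the converse I would assume $\chi_\Gamma(i_k:\mathcal{D}E)=0$ for every such $\mathcal{D}$ and first establish condition~(i). Taking the diagonal (classical) part of the equality of conditional states, for the pair $\mathcal{A}_a,\mathcal{A}_b$ with $\mathbf{D}$ the remaining $N-2$ players, shows that $i_a$ is independent of $\mathbf{D}$'s secret values and (combining the conditions for the ordered pairs $(a,b)$ and $(b,a)$) uniformly distributed. Consequently $p_I=p_{I'}$ whenever $I,I'$ agree outside the coordinates $a,b$, and since $\mathfrak{S}_N^0$ is generated by, hence connected under, such two-coordinate moves, all $p_I$ coincide, giving $p_I=1/d^{N-1}$. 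It then suffices to observe that the hypothesis at $|\mathcal{D}|=N-2$ is precisely the relation $\Upsilon_{\mathbf{DD'}E}^{i_k}=\Upsilon_{\mathbf{DD'}E}^{i'_k}$ that drives the forward proof of Theorem~\ref{Thm1}; together with the uniformity just shown, the hypotheses of Theorem~\ref{Thm1} are met and $\Gamma_{\mathbf{A}\mathbf{A'}}$ has the GSS form. (Equivalently, monotonicity of the Holevo information under discarding players propagates the vanishing to every coalition of size at most $N-2$, which is condition~(ii$'$).)

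The main obstacle is this uniformity step: the hypothesis only controls single-player measurements against coalitions of the one fixed size $N-2$, so one must extract enough classical constraints from these worst-case conditions alone to pin down the entire distribution, rather than assuming condition~(i) at the outset as Theorem~\ref{Thm1} does. The two-coordinate-move argument, justified by the group structure of $\mathfrak{S}_N^0$, is what closes this gap; once condition~(i) is in hand the remaining quantum structure is inherited verbatim from Theorem~\ref{Thm1}.
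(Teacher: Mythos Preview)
Your proposal is correct and follows the paper's overall strategy: both directions are reduced to Theorem~\ref{Thm1}, and the only nontrivial work is extracting the uniformity $p_I=1/d^{N-1}$ from the vanishing Holevo information at coalitions of size $N-2$. You also spell out the forward direction explicitly (via the bipartition $\mathcal{P}_1=\{\mathcal{A}_k,\mathcal{A}_l\}$ and averaging over $i_l$), which the paper leaves implicit in the discussion preceding the theorem.

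The one place your argument genuinely differs from the paper is the uniformity step. The paper fixes the pair $(\mathcal{A}_1,\mathcal{A}_2)$, extracts from the equality of conditional states the ratio identity $p_{i,j,\xi}/p_{i',j',\xi}=q_i/q_{i'}=r_j/r_{j'}$ whenever $i+j\equiv i'+j'$, and then runs a short chain $q_\alpha/q_\beta=q_{\alpha-\beta}/q_0$ which, after summing over $\alpha$, forces all marginals $q_\beta$ to coincide. Your version is more structural: you read off from the classical diagonal that $i_a$ is independent of the dishonest string $\xi$, combine this with the analogous statement for $i_b$ and the constraint $i_a+i_b\equiv-\sum\xi$ to get uniformity of $i_a$, and then invoke connectedness of $\mathfrak{S}_N^0$ under two-coordinate moves to propagate $p_I=p_{I'}$ globally. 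Both arguments are valid; the paper's computation is self-contained at one fixed pair, while your connectedness phrasing makes the underlying group structure of $\mathfrak{S}_N^0\subset\mathbb{Z}_d^N$ do the work and arguably generalizes more transparently.
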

\begin{proof}
We show that 
$p_{I}$'s are identical for all $I \in \mathfrak{S}_{N}^{0}$
if for any $\mathcal{D} \subset \{\mathcal{A}_{1},\cdots,\mathcal{A}_{N}\}$ 
with $|\mathcal{D}|=N-2$,
$\chi_{\Gamma}(i_{k}:\mathcal{D}E)=0$,
where $i_{k}$ is the measurement outcome of $\mathcal{A}_{k} \notin \mathcal{D}$. 
If it is true, 
then Theorem~\ref{Thm1} completes the proof.
Without loss of generality, we may assume that 
$\mathcal{D}=\{\mathcal{A}_{3},\cdots,\mathcal{A}_{N}\}$,
and $\chi_{\Gamma}(i_{k}:\mathcal{D}E)=0$ for $k=1,2$.
Note that the following state is a purification of the given state:
\begin{equation}
\ket{\Psi}_{A_{1}A_{2}\bold{D}\bold{A'}E}
=\sum_{i_{1},i_{2}=0}^{d-1}
\ket{i_{1},i_{2}}_{A_{1}A_{2}}\ket{\psi_{i_{1},i_{2}}}_{\bold{D}\bold{A'}E},
\nonumber
\end{equation}
where $\bold{D}$ is the secret part of $\mathcal{D}$
and 
$$\ket{\psi_{i_{1},i_{2}}}_{\bold{D}\bold{A'}E}=
\sum_{\xi \in \mathfrak{S}_{N-2}^{-i_{1}-i_{2}}}
\sqrt{p_{i_{1},i_{2},\xi}}\ket{\xi}_{\bold{D}}\ket{\Psi_{i_{1},i_{2},\xi}}_{\bold{A'}E}.$$
When $\mathcal{A}_{1}$'s measurement outcome is $i$,
the state of $\mathcal{D}$ and eavesdropper can be described as
$$\rho_{\mathcal{D}E}^{i_{1}=i}=\frac{1}{q_{i}}
\sum_{i_{2}=0}^{d-1}{\rm{tr}}_{A'_{1}A'_{2}}
\ket{\psi_{i,i_{2}}}_{\bold{D}\bold{A'}E}\bra{\psi_{i,i_{2}}},$$
where $q_{i}=\sum_{i_{2}}\sum_{\xi \in \mathfrak{S}_{N-2}^{-i-i_{2}}}p_{i,i_{2},\xi}$.
Since the Holevo information $\chi_{\Gamma}(i_{1}:\mathcal{D}E)$ equals zero,
$\rho_{\mathcal{D}E}^{i_{1}=i}=\rho_{\mathcal{D}E}^{i_{1}=i'}$
for any $\mathcal{A}_{1}$'s measurement outcomes $i$ and $i'$.
Similarly, since $\chi_{\Gamma}(i_{2}:\mathcal{D}E)=0$,
we also have
$\rho_{\mathcal{D}E}^{i_{2}=j}=\rho_{\mathcal{D}E}^{i_{2}=j'}$
for any $\mathcal{A}_{2}$'s measurement outcomes $j$ and $j'$,
where 
$$\rho_{\mathcal{D}E}^{i_{2}=j}=\frac{1}{r_{j}}
\sum_{i_{1}=0}^{d-1}{\rm{tr}}_{A'_{1}A'_{2}}
\ket{\psi_{i_{1},j}}_{\bold{D}\bold{A'}E}\bra{\psi_{i_{1},j}}$$
with $r_{j}=\sum_{i_{1}}\sum_{\xi \in \mathfrak{S}_{N-2}^{-i_{1}-j}}p_{i_{1},j,\xi}$.
Thus, if $i+j=i'+j' \pmod{d}$,
$$\frac{p_{i,j,\xi}}{p_{i',j',\xi}}
=\frac{q_{i}}{q_{i'}}
=\frac{r_{j}}{r_{j'}}$$ 
for $\xi \in \mathfrak{S}_{N-2}^{-i-j}$.
In particular, since
$$\frac{q_{\alpha}}{q_{\beta}}=
\frac{p_{\alpha,\beta-\alpha,\xi}}{p_{\beta,0,\xi}}
=\frac{r_{\beta-\alpha}}{r_{0}}
=\frac{p_{\alpha-\beta,\beta-\alpha,\zeta}}{p_{0,0,\zeta}}
=\frac{q_{\alpha-\beta}}{q_{0}}$$
for $\xi \in \mathfrak{S}_{N-2}^{\beta}$ and $\zeta \in \mathfrak{S}_{N-2}^{0}$,
we obtain 
$$q_{\beta}=q_{\beta}\sum_{\alpha=0}^{d-1}q_{\alpha-\beta}
=q_{0}\sum_{\alpha=0}^{d-1}q_{\alpha}=q_{0}$$
for any $\beta \in \mathbb{Z}_{d}$,
that is, $q_{i}$'s are all equal.
Hence, if $i+j=i'+j' \pmod{d}$,
then $p_{i,j,\xi}=p_{i',j',\xi}$ for all $\xi \in \mathfrak{S}_{N-2}^{-i-j}$.
By symmetry, it follows that 
$p_{I}$'s are identical for all $I \in \mathfrak{S}_{N}^{0}$.
\end{proof}

Theorem~\ref{Thm4} shows that players' secret information must be unbiased 
in order to be secure against dishonest players and eavesdropper.
Therefore, the condition (ii$'$) of the modified secret-sharing conditions
includes unbiasedness of the players' secret information,
and so the unbiasedness can be omitted in the condition (i) of the modified conditions.


\section{GSS Distillable Rate}
\label{Distillable GSS Rate}

In this section, we discuss the distillable rate with respect to the GSS state.
Before defining the distillable rate,
we need to consider one issue arising from the presence of dishonest players.
For instance, let us look at the 
secret-sharing state in Eq.~(\ref{ex1}) once more. 
For $\alpha, \beta \in \mathbb{C}$ with $|\alpha|^{2}+|\beta|^{2}=1$,
if we let
$\ket{\mu_{0}}=\alpha\ket{0}+\beta\ket{1}$ and
$\ket{\mu_{1}}=\beta^{*}\ket{0}-\alpha^{*}\ket{1}$,
the state can be written as
\begin{align*}
&\frac{1}{2}\ket{\mu_{0}}_{A_{1}}\ket{0}_{A_{2}}\ket{00}_{A'_{1}A'_{2}}
\left(\alpha^{*}\ket{00}_{A_{3}A'_{3}}+\beta^{*}\ket{11}_{A_{3}A'_{3}}\right)\\
&+\frac{1}{2}\ket{\mu_{0}}_{A_{1}}\ket{1}_{A_{2}}\ket{00}_{A'_{1}A'_{2}}
\left(\alpha^{*}\ket{10}_{A_{3}A'_{3}}+\beta^{*}\ket{01}_{A_{3}A'_{3}}\right)\\
&+\frac{1}{2}\ket{\mu_{1}}_{A_{1}}\ket{0}_{A_{2}}\ket{00}_{A'_{1}A'_{2}}
\left(\beta\ket{00}_{A_{3}A'_{3}}-\alpha\ket{11}_{A_{3}A'_{3}}\right)\\
&+\frac{1}{2}\ket{\mu_{1}}_{A_{1}}\ket{1}_{A_{2}}\ket{00}_{A'_{1}A'_{2}}
\left(\beta\ket{10}_{A_{3}A'_{3}}-\alpha\ket{01}_{A_{3}A'_{3}}\right).
\end{align*}
Hence, for any $\mathcal{A}_{1}$'s measurement basis 
$\left\{\ket{\mu_{0}}, \ket{\mu_{1}}\right\}$,
player $\mathcal{A}_{3}$ can get the other players' secret information
by properly measuring his/her system.
However, since this is a secret-sharing state,
it is GHZ distillable~\cite{CL08},
that is, the GHZ state can be asymptotically distilled from the state by LOCC.
Thus, if we define the distillable rate as
how many copies of the given state are required 
to asymptotically distill a GSS state through LOCC,
the completely insecure quantum state may have a strictly positive rate.

In order to avoid such an issue,
we employ the Devetak-Winter rate~\cite{DW05, IG17}. 
Let $\rho_{\mathcal{A}_{1} \cdots \mathcal{A}_{N}}$ be a given state.
We say that $\rho_{\mathcal{A}_{1} \cdots \mathcal{A}_{N}}$
has positive Devetak-Winter rate
if there is a set of measurement operations $\{\mathcal{M}_{k}\}_{1 \le k \le N}$
such that
for any $\mathcal{D} \subset \{\mathcal{A}_{1},\cdots,\mathcal{A}_{N}\}$ 
with $|\mathcal{D}|\leq N-2$,
$I\left(m_{i}:\bar{m}_{i} \right)-\chi_{\rho}\left(m_{i}:\mathcal{D}E\right)>0$,
where $I\left(X:Y \right)$ is the mutual information between $X$ and $Y$,
$\mathcal{A}_{i} \notin \mathcal{D}$, 
$m_{i}$ is the $\mathcal{A}_{i}$'s measurement outcome,
and $\bar{m}_{i}$ is the sum of the measurement outcomes of players except $\mathcal{A}_{i}$.
If we define the distillable rate 
only for quantum states with positive Devetak-Winter rate,
then we can rule out quantum states that are completely insecure.

\begin{definition}
\label{def GSS rate}
For given state $\rho_{\mathcal{A}_{1} \cdots \mathcal{A}_{N}}$
with positive Devetak-Winter rate,
let $P_{n}$ be a sequence of LOCC operations
such that $P_{n}\left(\rho^{\otimes n}\right)=\sigma_{n}$.
We call $\mathrm{P} \equiv \bigcup_{n=1}^{\infty} \left\{P_{n}\right\}$
a GSS distillation protocol of the state $\rho$
if $\lim_{n \to \infty} \| \sigma_{n}-\Upsilon_{d_{n}}\|=0$,
where $\Upsilon_{d_{n}}$ is a GSS state
that has a secret part with dimension $d_{n}^{N}$.
For given protocol $\mathrm{P}$, its rate is defined as
\begin{equation}
R(\mathrm{P})=\limsup_{n \to \infty} \frac{\log d_{n}}{n},
\nonumber
\end{equation}
and the GSS distillable rate of the state $\rho$ is given by
\begin{equation}
D_{G}(\rho)=\sup_{\mathrm{P}}R(\mathrm{P}).
\nonumber
\end{equation}
\end{definition}

We note that when players are divided into two parties, 
$\mathcal{P}_{1}$ and $\mathcal{P}_{2}$,
they can have a private state between two parties
if they share a GSS state.
Thus, we have
$$D_{G}\left(\rho_{\mathcal{A}_{1} \cdots \mathcal{A}_{N}}\right)
\le K_{D}^{\mathcal{P}_{1}:\mathcal{P}_{2}}
\left(\rho_{\mathcal{A}_{1} \cdots \mathcal{A}_{N}}\right),$$
where $K_{D}^{\mathcal{P}_{1}:\mathcal{P}_{2}}$ is the distillable key rate
between $\mathcal{P}_{1}$ and $\mathcal{P}_{2}$,
which is defined by protocols to distill private states~\cite{H09}.
In addition, since the relative entropy of entanglement~(REE)
is an upper bound of the distillable key rate~\cite{H09},
we obtain
\begin{equation}
\label{REE bound}
D_{G}\left(\rho_{\mathcal{A}_{1} \cdots \mathcal{A}_{N}}\right)
\le E_{r}^{\mathcal{P}_{1}:\mathcal{P}_{2}}
\left(\rho_{\mathcal{A}_{1} \cdots \mathcal{A}_{N}}\right),
\end{equation}
where $E_{r}^{\mathcal{P}_{1}:\mathcal{P}_{2}}$ is REE
between $\mathcal{P}_{1}$ and $\mathcal{P}_{2}$, that is,
$$E_{r}^{\mathcal{P}_{1}:\mathcal{P}_{2}}(\rho)
=\inf_{\sigma \in \mathrm{SEP}_{\mathcal{P}_{1}:\mathcal{P}_{2}}}
S(\rho \| \sigma).$$
Here, $S(\rho \| \sigma)=-S(\rho)-\textrm{Tr}\rho \log \sigma$ is the relative entropy, 
and $\mathrm{SEP}_{\mathcal{P}_{1}:\mathcal{P}_{2}}$
is the set of bipartite separable states of the system $\mathcal{P}_{1}\mathcal{P}_{2}$.

Using the GSS distillable rate,
one can define irreducible GSS state,
as irreducible private state is defined in Ref.~\cite{H09}.
For any GSS state $\Upsilon_{\bold{A}\bold{A'}}$
whose secret part is of dimension $d^{N}$,
the state is said to be irreducible
if $D_{G}\left(\Upsilon_{\bold{A}\bold{A'}}\right)=\log d$.
Then the following theorem provides a way
to check that a given GSS state is irreducible.

\begin{theorem}
Let $\Upsilon_{\bold{A}\bold{A'}}$ be a GSS state
and $\bar{\mathcal{A}}_{k}$ denote the party of players except $\mathcal{A}_{k}$.
If $\Upsilon_{\bold{A}\bold{A'}}$ is written as
\begin{align*}
\Upsilon_{\bold{A}_{k}\bar{\bold{A}}_{k}\bold{A'}}=
\frac{1}{d^{N-1}}\sum_{i_{k}I, j_{k}J \in \mathfrak{S}_{N}^{0}}
&\ket{i_{k}I}_{\bold{A}_{k}\bar{\bold{A}}_{k}}\bra{j_{k}J}\\
\otimes
&\left(U_{\bar{\bold{A}}'_{k}}^{I}V^{i_{k}}_{\bold{A'}}\right)
\sigma_{\bold{A'}}
\left(U_{\bar{\bold{A}}'_{k}}^{J}V^{j_{k}}_{\bold{A'}}\right)^{\dagger},
\end{align*}
where $\sigma_{\bold{A'}}$ is a state on the system $\bold{A'}$,
and $\bar{\bold{A}}_{k}$ and $\bar{\bold{A}}'_{k}$
are the secret part and the shield part of $\bar{\mathcal{A}_{k}}$, respectively,
then 
\begin{equation}
\label{REE bound 2}
E_{r}^{\mathcal{A}_{k}:\bar{\mathcal{A}_{k}}}\left(\Upsilon \right)
\le \log d + \frac{1}{d}\sum_{i_{k}}
E_{r}^{\mathcal{A}_{k}:\bar{\mathcal{A}_{k}}}\left(V^{i_{k}}_{\bold{A'}}
\sigma_{\bold{A'}}\left(V^{i_{k}}_{\bold{A'}}\right)^{\dagger} \right).
\end{equation}
\end{theorem}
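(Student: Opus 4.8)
The plan is to bound $E_r^{\mathcal{A}_k:\bar{\mathcal{A}}_k}(\Upsilon)$ from above by exhibiting one explicit separable state $\tau$ across the cut $\mathcal{A}_k:\bar{\mathcal{A}}_k$ and estimating $S(\Upsilon\|\tau)$, since $E_r^{\mathcal{A}_k:\bar{\mathcal{A}}_k}(\Upsilon)\le S(\Upsilon\|\tau)$ for every such $\tau$. First I would simplify $\Upsilon$ by a local twist on $\bar{\mathcal{A}}_k$'s side. The controlled unitary $W=\sum_{I}\ket{I}_{\bar{\bold{A}}_k}\bra{I}\otimes(U^{I}_{\bar{\bold{A}}'_k})^{\dagger}$ acts only on the systems $\bar{\bold{A}}_k\bar{\bold{A}}'_k$ held by $\bar{\mathcal{A}}_k$, so it is a local unitary and leaves $E_r^{\mathcal{A}_k:\bar{\mathcal{A}}_k}$ invariant. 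Applying it cancels each $U^{I}_{\bar{\bold{A}}'_k}$ against its adjoint, and since the shield operator then no longer depends on $I$, the $\bar{\bold{A}}_k$ register collapses onto the orthonormal collective states $\ket{\phi_{i_k}}=d^{-(N-2)/2}\sum_{I\in\mathfrak{S}_{N-1}^{-i_k}}\ket{I}$ (strings summing to distinct residues have disjoint support), giving
$$\Upsilon'=\frac{1}{d}\sum_{i_k,j_k}\ket{i_k}_{\bold{A}_k}\bra{j_k}\otimes\ket{\phi_{i_k}}_{\bar{\bold{A}}_k}\bra{\phi_{j_k}}\otimes V^{i_k}_{\bold{A'}}\sigma_{\bold{A'}}(V^{j_k}_{\bold{A'}})^{\dagger}.$$

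Now $\Upsilon'$ has the shape of a private state whose shield carries the unitaries $V^{i_k}_{\bold{A'}}$. The crucial point, and the main obstacle, is that $V^{i_k}_{\bold{A'}}$ acts on the whole shield $\bold{A'}=\bold{A}'_k\bar{\bold{A}}'_k$ and hence straddles the cut, so it cannot be removed by any local unitary. I would circumvent this by never removing $V^{i_k}$ and instead absorbing it into the reference state. For each $i_k$ let $\tau_{i_k}$ be a separable state on $\bold{A'}$ achieving $E_r^{\mathcal{A}_k:\bar{\mathcal{A}}_k}(V^{i_k}_{\bold{A'}}\sigma_{\bold{A'}}(V^{i_k}_{\bold{A'}})^{\dagger})$, and set
$$\tau=\frac{1}{d}\sum_{i_k}\ket{i_k}_{\bold{A}_k}\bra{i_k}\otimes\ket{\phi_{i_k}}_{\bar{\bold{A}}_k}\bra{\phi_{i_k}}\otimes\tau_{i_k}.$$
This $\tau$ is separable across $\mathcal{A}_k:\bar{\mathcal{A}}_k$, since each term is a product of the classical flags on the two sides with a $\tau_{i_k}$ that is itself separable across $\bold{A}'_k:\bar{\bold{A}}'_k$.

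It then remains to evaluate $S(\Upsilon'\|\tau)=-S(\Upsilon')-\mathrm{Tr}[\Upsilon'\log\tau]$. Because $\tau$ is block diagonal in the orthonormal flags $\{\ket{i_k}\ket{\phi_{i_k}}\}$, so is $\log\tau=\sum_{i_k}\ket{i_k}\bra{i_k}\otimes\ket{\phi_{i_k}}\bra{\phi_{i_k}}\otimes(-\log d+\log\tau_{i_k})$, and only the diagonal blocks of $\Upsilon'$ contribute, yielding $-\mathrm{Tr}[\Upsilon'\log\tau]=\log d-\tfrac{1}{d}\sum_{i_k}\mathrm{Tr}[V^{i_k}_{\bold{A'}}\sigma_{\bold{A'}}(V^{i_k}_{\bold{A'}})^{\dagger}\log\tau_{i_k}]$. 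A short secondary computation gives $S(\Upsilon')=S(\sigma_{\bold{A'}})$: purifying $\sigma_{\bold{A'}}$ with a reference $R$, the reduced state on $R$ is unchanged by each $V^{i_k}_{\bold{A'}}$, so the spectrum of $\Upsilon'$ matches that of $\sigma_{\bold{A'}}$.

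Finally I would substitute the defining identity $\mathrm{Tr}[V^{i_k}_{\bold{A'}}\sigma_{\bold{A'}}(V^{i_k}_{\bold{A'}})^{\dagger}\log\tau_{i_k}]=-S(\sigma_{\bold{A'}})-E_r^{\mathcal{A}_k:\bar{\mathcal{A}}_k}(V^{i_k}_{\bold{A'}}\sigma_{\bold{A'}}(V^{i_k}_{\bold{A'}})^{\dagger})$, which uses unitary invariance of the von Neumann entropy. The two occurrences of $S(\sigma_{\bold{A'}})$ cancel, leaving $S(\Upsilon'\|\tau)=\log d+\tfrac{1}{d}\sum_{i_k}E_r^{\mathcal{A}_k:\bar{\mathcal{A}}_k}(V^{i_k}_{\bold{A'}}\sigma_{\bold{A'}}(V^{i_k}_{\bold{A'}})^{\dagger})$, and since $E_r^{\mathcal{A}_k:\bar{\mathcal{A}}_k}(\Upsilon)=E_r^{\mathcal{A}_k:\bar{\mathcal{A}}_k}(\Upsilon')\le S(\Upsilon'\|\tau)$ this is exactly the claimed bound~(\ref{REE bound 2}). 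The only remaining care is finiteness of the relative entropies, i.e. $\mathrm{supp}(V^{i_k}_{\bold{A'}}\sigma_{\bold{A'}}(V^{i_k}_{\bold{A'}})^{\dagger})\subseteq\mathrm{supp}(\tau_{i_k})$, which holds for the optimal reference states.
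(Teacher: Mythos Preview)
Your proof is correct and follows essentially the same route as the paper. Both arguments apply the local controlled unitary $W=\sum_{I}\ket{I}_{\bar{\bold{A}}_k}\bra{I}\otimes(U^{I}_{\bar{\bold{A}}'_k})^{\dagger}$ on $\bar{\mathcal{A}}_k$'s side to strip off the $U^{I}$'s and reduce $\Upsilon$ to a private state $\Upsilon'$ with key part $\bold{A}_k\bar{\bold{A}}_k$ and shield unitaries $V^{i_k}_{\bold{A'}}$. The only difference is that the paper then invokes the known relative-entropy bound for private states (Theorem~3 of Horodecki \emph{et al.}~\cite{H09}), whereas you reprove that bound in place by exhibiting the block-diagonal separable reference $\tau=\tfrac{1}{d}\sum_{i_k}\ket{i_k}\bra{i_k}\otimes\ket{\phi_{i_k}}\bra{\phi_{i_k}}\otimes\tau_{i_k}$ and computing $S(\Upsilon'\|\tau)$ directly; your explicit version has the minor advantage of being self-contained.
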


\begin{proof}
Let us consider the unitary operator
$$W_{\bar{\bold{A}}_{k}\bar{\bold{A}}'_{k}}
\equiv\sum_{I \in \mathbb{Z}_{d}^{N-1}}
\ket{I}_{\bar{\bold{A}}_{k}}\bra{I}
\otimes \left(U_{\bar{\bold{A}}'_{k}}^{I}\right)^{\dagger}.$$
Since the REE is invariant under local unitary operation,
$E_{r}^{\mathcal{A}_{k}:\bar{\mathcal{A}}_{k}}\left(\Upsilon \right)
=E_{r}^{\mathcal{A}_{k}:\bar{\mathcal{A}}_{k}}\left(W\Upsilon W^{\dagger}\right)$.
Note that
$$W\Upsilon W^{\dagger}=
\frac{1}{d}\sum_{i_{k}, j_{k}}
\ket{i_{k}\psi_{i_{k}}}_{\bold{A}_{k}\bar{\bold{A}}_{k}}\bra{j_{k}\psi_{j_{k}}}
\otimes
V^{i_{k}}_{\bold{A'}}
\sigma_{\bold{A'}}
\left(V^{j_{k}}_{\bold{A'}}\right)^{\dagger},$$
where 
$$\ket{\psi_{i_{k}}}_{\bar{\bold{A}}_{k}}=
\frac{1}{\sqrt{d^{N-2}}}\sum_{I \in \mathfrak{S}_{N}^{-i_{k}}}
\ket{I}_{\bar{\bold{A}}_{k}}.$$
Hence, we can regard $W\Upsilon W^{\dagger}$ as a private state.
By Theorem~3 in Ref.~\cite{H09}, we can prove this theorem.
\end{proof}

Combining inequalities (\ref{REE bound}) and (\ref{REE bound 2}),
we have 
$$D_{G}\left(\Upsilon_{\bold{A}\bold{A'}}\right)
\le \log d + \frac{1}{d}\sum_{i_{k}}
E_{r}^{\mathcal{A}_{k}:\bar{\mathcal{A}_{k}}}\left(V^{i_{k}}_{\bold{A'}}
\sigma_{\bold{A'}}\left(V^{i_{k}}_{\bold{A'}}\right)^{\dagger} \right).$$
Hence, if there is a $k$ such that
$$E_{r}^{\mathcal{A}_{k}:\bar{\mathcal{A}_{k}}}\left(V^{i_{k}}_{\bold{A'}}
\sigma_{\bold{A'}}\left(V^{i_{k}}_{\bold{A'}}\right)^{\dagger} \right)=0$$
for all $i_{k}$,
the given GSS state is irreducible,
since $D_{G}\left(\Upsilon_{\bold{A}\bold{A'}}\right) \ge \log d$.


\section{Examples}
\label{Examples}

\begin{example}
Let us consider a quantum network consisting of only private states.
In order for each player to get a secret bit for secret sharing in the network,
all players should be connected via private states.
More precisely, assume that Alice and Bob, Bob and Charlie, 
and Charlie and Alice share the following private states:
\begin{equation}
\label{pdit AB}
\gamma_{A_{1}B_{2}A'_{1}B'_{2}}
=\frac{1}{2} \sum_{i,i'=0}^{1}
\ket{ii}_{A_{1}B_{2}}\bra{i'i'} 
\otimes
U_{i}\sigma_{A'_{1}B'_{2}}U^{\dagger}_{i'},
\nonumber
\end{equation}
\begin{equation}
\label{pdit BC}
\hat{\gamma}_{B_{1}C_{2}B'_{1}C'_{2}}
=\frac{1}{2} \sum_{j,j'=0}^{1}
\ket{jj}_{B_{1}C_{2}}\bra{j'j'} 
\otimes
\hat{U}_{j}\hat{\sigma}_{B'_{1}C'_{2}}\hat{U}^{\dagger}_{j'},
\nonumber
\end{equation}
\begin{equation}
\label{pdit CA}
\tilde{\gamma}_{C_{1}A_{2}C'_{1}A'_{2}}
=\frac{1}{2} \sum_{k,k'=0}^{1}
\ket{kk}_{C_{1}A_{2}}\bra{k'k'} 
\otimes
\tilde{U}_{k}\tilde{\sigma}_{C'_{1}A'_{2}}\tilde{U}^{\dagger}_{k'},
\nonumber
\end{equation}
where $\sigma_{A'_{1}B'_{2}}$, $\hat{\sigma}_{B'_{1}C'_{2}}$,
and $\tilde{\sigma}_{C'_{1}A'_{2}}$ are arbitrary states,
and $\left\{U_{i}\right\}$, $\left\{\hat{U}_{j}\right\}$,
and $\left\{\tilde{U}_{k}\right\}$
are unitary operators on the system $A'_{1}B'_{2}$,
$B'_{1}C'_{2}$, and $C'_{1}A'_{2}$, respectively.
If $m_{1}$ and $m_{2}$ are measurement outcomes
when they measure their key parts 1 and 2 in the computational basis, respectively, 
then $m_{1}+m_{2} \pmod{2}$ can be used as a secret bit for secret sharing.

This secret bit can also be obtained from the process 
where each player takes the unitary 
$W\equiv\sum_{i,j=0}^{1}\ket{i+j,j}\bra{i,j}$ on his/her key parts, 
and measures the first key part in the computational basis.
After taking the unitary operators on their key parts, 
the state becomes
\begin{align*}
\Upsilon&=\frac{1}{4}\sum_{ijk,i'j'k'\in\mathfrak{S}_{3}^{0}}
\ket{ijk}_{\bold{P}}\bra{i'j'k'}
\otimes
V^{i}\hat{V}^{k}\Lambda\left(V^{i'}\hat{V}^{k'}\right)^{\dagger}\\
&=\frac{1}{4}\sum_{ijk,i'j'k'\in\mathfrak{S}_{3}^{0}}
\ket{ijk}_{\bold{P}}\bra{i'j'k'}
\otimes
\hat{V}^{j}\tilde{V}^{i}\Lambda\left(\hat{V}^{j'}\tilde{V}^{i'}\right)^{\dagger}\\
&=\frac{1}{4}\sum_{ijk,i'j'k'\in\mathfrak{S}_{3}^{0}}
\ket{ijk}_{\bold{P}}\bra{i'j'k'}
\otimes
\tilde{V}^{k}V^{j}\Lambda\left(\tilde{V}^{k'}V^{j'}\right)^{\dagger},
\end{align*}
where $\bold{P}=A_{1}B_{1}C_{1}$,
\begin{align*}
\Lambda=\frac{1}{2}\sum_{l,l'=0}^{1}
&\ket{lll}_{A_{2}B_{2}C_{2}}\bra{l'l'l'}\\
&\otimes U_{l}\sigma_{A'_{1}B'_{2}}U^{\dagger}_{l'}
\otimes \hat{U}_{l}\hat{\sigma}_{B'_{1}C'_{2}}\hat{U}^{\dagger}_{l'}
\otimes \tilde{U}_{l}\tilde{\sigma}_{C'_{1}A'_{2}}\tilde{U}^{\dagger}_{l'},
\end{align*}
$$V_{B_{2}A'_{1}B'_{2}}^{1}=
\sum_{l=0}^{1}\ket{l}_{B_{2}}\bra{l+1}
\otimes U_{l}U_{l+1}^{\dagger},$$
$$\hat{V}_{C_{2}B'_{1}C'_{2}}^{1}=
\sum_{l=0}^{1}\ket{l}_{C_{2}}\bra{l+1}
\otimes \hat{U}_{l}\hat{U}_{l+1}^{\dagger},$$
$$\tilde{V}_{A_{2}C'_{1}A'_{2}}^{1}=
\sum_{l=0}^{1}\ket{l}_{A_{2}}\bra{l+1}
\otimes \tilde{U}_{l}\tilde{U}_{l+1}^{\dagger},$$
and $V^{0}$, $\hat{V}^{0}$ and $\tilde{V}^{0}$ are identity operators.
This is to prepare the GSS state $\Upsilon$
so as to obtain a secret bit for secret sharing.

One may think that sharing private states is enough for players
to carry out secret sharing.
However, as we can see above, 
it can be seen as a process of preparing a GSS state
from private states which share in advance between every pair of the total players, 
and hence it can be spatially inefficient
because the second key parts of the private states are not used 
in obtaining a secret bit.
Therefore, if it is possible to directly share a GSS state,
it can be more efficient and more productive than sharing private states
when players want to perform secret sharing.
\end{example}

\begin{example}
As in Refs.~\cite{CL08, H09},
it can be an interesting task 
to find a GSS state with low distillable entanglement.
To this end, let us first consider the following state:
\begin{align*}
\Gamma=&
a_{0}\ket{\psi_{0}}_{ABC}\bra{\psi_{0}}\otimes
\left(\rho_{0} \otimes \sigma_{0} \otimes \tau_{0}\right)\\
&+a_{1}\ket{\psi_{1}}_{ABC}\bra{\psi_{1}}\otimes
\left(\rho_{1} \otimes \sigma_{0} \otimes \tau_{1}\right)\\
&+a_{2}\ket{\psi_{2}}_{ABC}\bra{\psi_{2}}\otimes
\left(\rho_{1} \otimes \sigma_{1} \otimes \tau_{0}\right)\\
&+a_{3}\ket{\psi_{3}}_{ABC}\bra{\psi_{3}}\otimes
\left(\rho_{0} \otimes \sigma_{1} \otimes \tau_{1}\right),
\end{align*}
where $\ket{\psi_{0}}=\frac{1}{2}
\left(\ket{000}+\ket{011}+\ket{101}+\ket{110}\right)$,
$\ket{\psi_{1}}=Z_{A}\ket{\psi_{0}}$,
$\ket{\psi_{2}}=Z_{B}\ket{\psi_{0}}$,
$\ket{\psi_{3}}=Z_{C}\ket{\psi_{0}}$,
and
$\rho_{i}$, $\sigma_{j}$, and $\tau_{k}$ have orthogonal supports 
on the system $A'_{1}B'_{2}$, $B'_{1}C'_{2}$, and $C'_{1}A'_{2}$, respectively.
Then we can find unitary operators $U_{A'_{1}B'_{2}}$,
$V_{B'_{1}C'_{2}}$, and $W_{C'_{1}A'_{2}}$
that satisfy $U_{A'_{1}B'_{2}}\rho_{i}=(-1)^{i}\rho_{i}$,
$V_{B'_{1}C'_{2}}\sigma_{j}=(-1)^{j}\sigma_{j}$,
and $W_{C'_{1}A'_{2}}\tau_{k}=(-1)^{k}\tau_{k}$.
Using these operators, it can be shown that $\Gamma$ is a GSS state.

We now consider the state
\begin{align*}
\Gamma=&
p\ket{\psi_{0}}\bra{\psi_{0}}\otimes
\left(\varrho_{s} \otimes \sigma_{0} \otimes \varrho_{s}\right)\\
&+p\ket{\psi_{1}}\bra{\psi_{1}}\otimes
\left(\varrho_{a} \otimes \sigma_{0} \otimes \varrho_{a}\right)\\
&+\left(\frac{1}{2}-p\right)\ket{\psi_{2}}\bra{\psi_{2}}\otimes
\left(\varrho_{a} \otimes \sigma_{1} \otimes \varrho_{s}\right)\\
&+\left(\frac{1}{2}-p\right)\ket{\psi_{3}}\bra{\psi_{3}}\otimes
\left(\varrho_{s} \otimes \sigma_{1} \otimes \varrho_{a}\right),
\end{align*}
where $\varrho_{s}$ and $\varrho_{a}$ are
symmetric and antisymmetric Werner states~\cite{W89}
$$\varrho_{s}
=\frac{\mathcal{I}+\mathcal{F}}{d^{2}+d},$$
$$\varrho_{a}
=\frac{\mathcal{I}-\mathcal{F}}{d^{2}-d}$$
with identity operator $\mathcal{I}$ on the $d \otimes d$ system
and the flip operator $\mathcal{F}=\sum_{i,j=0}^{d-1}\ket{ij}\bra{ji}$.
It follows from tedious but straightforward calculations
that $$\left\|\Gamma^{T_{AA'_{1}A'_{2}}}\right\|_{1}=
1+\left(\frac{2}{d^{2}}+\frac{2}{d}\right)(1+2p),$$
where $\|\cdot\|_{1}$ is the trace norm.
Since the distillable entanglement is upper bounded by
the log negativity~\cite{VW02}
$E_{N}(\Gamma)=\log_{2}\left\|\Gamma^{T_{AA'_{1}A'_{2}}}\right\|_{1}$,
we can construct the GSS state
that has arbitrarily low bipartite distillable entanglement
between Alice and the rest of players by increasing $d$.
\end{example}


\section{Conclusion}
\label{Conclusion}
In this work, we have presented the GSS state, and explored its properties.
The GSS state provides a classical information for $(n,n)$-threshold secret sharing,
which is secure against dishonest players and eavesdropper.
Furthermore, if $N$ players share an $N$-party GSS state,
LOCC enables arbitrary $M$ players out of the total players 
to share an $M$-party GSS state. 
We have also defined the GSS distillable rate, 
and it has been shown that 
the REE between any bipartition of the total players 
is an upper bound on the GSS distillable rate.

The GSS state can be regarded as 
a generalization of the private state with respect to secret sharing.
In addition, when players share a GSS state,
any two players among them can share a private state by all players' LOCC.
Thus, by applying various research results related to the private state,
the results can also be used to investigate multipartite communication.

In a quantum network, if players share a GSS state,
any arbitrary parties of players can perform QKD or quantum secret sharing.
We can naturally have the following question:
Is the GSS state the only quantum state on which players can do them?
If we answer this question,
we can have a more in-depth discussion of the secure quantum network.

There are interesting future works related to the GSS state.
First, we can think about a bound entangled state with positive GSS distillable rate.
If we find such a state,
it can help us to study the relationship 
between distillable multipartite entanglement and the GSS distillable rate in detail.
Second, it can be an intriguing task 
to find a way to share a GSS state between players and repeaters in quantum networks.
If there exists such a way,
we can construct a quantum network
that ensures secure multipartite communication among players.
Therefore the GSS state could be considered as a new resource
in multipartite quantum communication.

\acknowledgements{
This research was supported by the Basic Science Research
Program through the National Research Foundation of Korea funded by the Ministry of Science and ICT 
(Grant No.NRF-2019R1A2C1006337) and the Ministry of Science and
ICT, Korea, under the Information Technology Research Center support program 
(Grant No. IITP-2020-2018-0-01402) supervised by the Institute for Information and Communications
Technology Promotion.
S.L. acknowledges support from Research Leave
Program of Kyung Hee University in 2018.
}

\bibliography{reference}

\end{document}